\newtheorem{theorem}{Theorem}
\newtheorem{corollary}{Corollary}
\newtheorem{remark}{Remark}
\newtheorem{lemma}{Lemma}
\newcommand{\R}{\mathbb{R}}
\crefname{rem}{Remark}{Remarks}
\crefname{exam}{Example}{Examples}
\crefname{assum}{Assumption}{Assumptions}
\crefname{prop}{Proposition}{Propositions}
\crefname{propy}{Property}{Properties}
\crefname{cor}{Corollary}{Corollaries}
\crefname{lem}{Lemma}{Lemmas}
\crefname{section}{Section}{Sections}
\crefname{thm}{Theorem}{Theorems}
\crefname{defn}{Definition}{Definitions}
\crefname{figure}{Fig.}{Fig.}
\Crefname{figure}{Figure}{Figures}
\crefname{equation}{}{}
\begin{document}

\title{Optimal Control of Incompressible Ideal Flows with Obstacle Avoidance\\
\thanks{The authors acknowledge financial support from Grant PID2022-137909NB-C21 funded by MCIN/AEI/ 10.13039/501100011033. A.A.S. was partially supported by MICIU/AEI/10.13039/501100011033/ FEDER, UE, Grant No. PID2024-155187OB-I00.
A.B. was partially supported by NSF grant  DMS-2103026, and AFOSR grants FA
9550-22-1-0215 and FA 9550-23-1-0400. L.C was supported in part by iRoboCity2030-CM, TEC-2024/TEC-62, funded by Comunidad de Madrid.}}

\author{\IEEEauthorblockN{Alexandre Anahory Sim\~oes}
\IEEEauthorblockA{\textit{IE School of Science and Technology,} \\
\textit{IE University}\\
Madrid, Spain \\
alexandre.anahory@ie.edu}
\and
\IEEEauthorblockN{Anthony Bloch}
\IEEEauthorblockA{\textit{Department of Mathematics,} \\ \textit{University of Michigan,}\\
Ann Arbor, MI 48109, USA \\
abloch@umich.edu}
\and
\IEEEauthorblockN{Leonardo Colombo}
\IEEEauthorblockA{\textit{Centre for Automation and Robotics (CSIC-UPM)} \\
Madrid, Spain\\
leonardo.colombo@csic.es}
}

\maketitle


\begin{abstract}
It was shown in \cite{bloch2000optimal} that an optimal control formulation for incompressible ideal fluid flow yields Euler's equations. In this paper, we consider a variational obstacle-avoidance formulation for incompressible ideal flows by introducing a barrier-type potential in the associated optimal control functional. This leads to \textit{modified Euler equations for an inviscid fluid}, in which the barrier term acts on the Lagrangian configuration and appears in the Eulerian description as a shift in the effective pressure. We also present a numerical illustration of the reduced Eulerian dynamics, showing that the barrier term induces a localized deformation of the flow near the obstacle region, consistent with its role as an obstacle-avoidance penalization.
\end{abstract}


\IEEEpeerreviewmaketitle

\section{Introduction}
It is well known that the Euler equations for a perfect fluid can be seen as geodesic equations on the group of volume-preserving diffeomorphisms of the fluid domain, endowed with the $L^2$ Riemannian metric \cite{arnold1966geometrie}. This observation has motivated many developments in geometric mechanics, especially in symmetry and reduction for incompressible fluid flow \cite{marsden1983coadjoint}.

Since the work of Arnold \cite{arnold1966geometrie}, the geometric formulation in terms of diffeomorphism groups has been extended to a broad class of equations arising in hydrodynamics, including compressible fluids \cite{ebin1969groups} and magnetohydrodynamics \cite{marsden1984semidirect}.

Optimal control for incompressible fluids have been studied in \cite{bloch2000optimal} and \cite{holm2009euler}, where the main question is whether Euler's fluid equations admit an optimal control interpretation, and how the resulting fluid equations compare with the symmetric generalized rigid body equations \cite{bloch1996optimal,bloch1997double,bloch1998discrete}. In particular, \cite{bloch2000optimal} shows that an optimal control formulation leads to the standard impulse form of Euler’s equations in Lagrangian variables.

In recent years, path planning has become increasingly important in robotics and, more broadly, in control engineering. The goal is to construct trajectories that satisfy additional requirements such as obstacle avoidance or collision avoidance. An important class of examples arises in rigid body systems, where trajectories may be modeled as curves on $SO(3)\times\R^3$ or $\mathrm{SE}(3)$, both naturally equipped with physically meaningful Riemannian metrics. In this setting, trajectory generation may be viewed as a variational problem on a Lie group.

Geodesic equations, however, do not account for tasks such as obstacle avoidance or inter-agent collision avoidance, which are essential in many path-planning problems. One strategy, introduced in \cite{BlCaCoCDC}, is to augment the action functional with an artificial potential that grows near the obstacles. Necessary conditions for optimality were derived in \cite{BlCaCoCDC,BlCaCoIJC} for obstacle avoidance, and in \cite{mishal,sh} for collision avoidance. Reduction by symmetry on Lie groups and symmetric spaces was considered in \cite{point}.

In this paper, we study a variational optimal control formulation for incompressible ideal fluid flow based on the Hamiltonian--Pontryagin principle. More precisely, we introduce a barrier-type potential in the cost functional in order to penalize fluid configurations approaching a prescribed obstacle region. This leads to \textit{modified Euler equations} on the Lie group of volume-preserving diffeomorphisms. In Theorem~\ref{th1} we derive the corresponding necessary conditions for optimality. Since these equations do not yet provide a closed evolution equation for the Eulerian velocity field, Theorem~\ref{th2} shows that each critical curve induces a modified Euler equation in which the barrier term appears as a shift in the effective pressure. Corollary~1 then gives a boundary interpretation of this pressure shift through the normal pressure balance. We emphasize that the present work focuses on the variational derivation and on the reduced Eulerian interpretation of the barrier term, rather than on the numerical solution of the full two-point boundary value problem.

The paper is organized as follows. Section \ref{sec2} introduces incompressible ideal fluid flow and formulates the obstacle-avoidance problem. Section \ref{sec3} contains the main results on the extremals and the modified Euler equations. Section \ref{sec:exp} presents a numerical illustration of the reduced Eulerian dynamics induced by the barrier-type potential. We conclude with some final remarks and directions for future work. Technical calculations are collected in the Appendix.

\section{Inviscid, Incompressible Fluid Flows} \label{sec2}
\subsection{Dynamical equations for inviscid incompressible flows}



We consider incompressible ideal flows on a domain $\Omega \subset \mathbb{R}^d$ with $d\in\{2,3\}$. In this setting, the Eulerian velocity field $v=v(x,t)$ and the pressure $p=p(x,t)$ satisfy the incompressible Euler equations \cite{arnold1966geometrie}
\begin{equation}\label{eulereq}
    \frac{\partial v}{\partial t} + v \cdot \grad v = - \grad p,\quad
    \hbox{div } v = 0, \,\,\,x \in \Omega. 
\end{equation}
Here $v$ denotes the fluid velocity and $p$ the pressure. The incompressibility condition $\hbox{div } v=0$ expresses preservation of volume along the flow.

Throughout the paper, we assume that $\Omega$ is a bounded domain and that the velocity field is tangent to the boundary. This is the standard geometric setting for incompressible ideal flow on bounded domains; see, for example, \cite{holm1998euler,gaybalmaz2011clebsch}. 



Unlike the whole-space or periodic-box setting considered in some earlier works, here we work on a bounded domain to formulate the obstacle-avoidance penalty in a setting compatible with the geometric description of ideal incompressible flow. The above Eulerian equations will serve as the reference dynamics for the modified equations derived below.

\subsection{Impulse dynamics}


For later use, we recall the impulse formulation associated with the incompressible Euler equations. Let $v$ be a divergence-free velocity field on $\Omega$. We introduce the \emph{impulse density} $z$ by
\begin{equation}
    z = v + \grad k,
    \label{impulsedynamics}
\end{equation}
where $k=k(x,t)$ is a scalar field.


Using \eqref{eulereq} together with \eqref{impulsedynamics}, one obtains
\begin{equation}
    \frac{\partial z}{\partial t} - v \times \curl z = \grad \Lambda,
    \qquad \div v = 0,
    \label{IDmodified}
\end{equation}
where $\displaystyle{\Lambda = \frac{\partial k}{\partial t} - p - \frac{1}{2} v \cdot v}$ is the associated gauge.


At this stage, the scalar field $k$ is not unique, so different gauge choices are possible. Here we adopt the geometric gauge $\Lambda = - v \cdot z$.
With this choice, \eqref{IDmodified} becomes
\begin{equation}
    \frac{\partial z}{\partial t} + (v \cdot \grad) z + (\grad v)^T z = 0,
    \qquad \div v = 0,
    \label{impulseeq}
\end{equation}
and \(k\) is determined by
$\displaystyle{\frac{d k}{d t} = p - \frac{1}{2} v \cdot v}$.

\subsection{Coordinate systems for fluid flows}

It is known, following Arnold \cite{arnold1966geometrie}, that the dynamics of an incompressible ideal fluid admits a geometric interpretation analogous to that of a rigid body. In this setting, the configuration space is the Lie group of volume-preserving diffeomorphisms of the fluid domain $\Omega$, denoted by $\hbox{Diff}_{\mathrm{vol}}(\Omega)$, with group operation given by composition \cite{holm1998euler,gaybalmaz2011clebsch}. 


A configuration $\varphi \in \hbox{Diff}_{\mathrm{vol}}(\Omega)$ maps a reference point $X \in \Omega$ to its current position $x=\varphi(X)\in\Omega$. Thus, $\varphi$ describes the motion of the fluid particles and therefore the fluid configuration.  A motion of the fluid is a curve $\varphi_t \in \hbox{Diff}_{\mathrm{vol}}(\Omega)$, written as $x=\varphi(X,t)$. The associated material velocity is
$\displaystyle{\dot{\varphi}(X,t)=\frac{\partial \varphi(X,t)}{\partial t}}$, and the corresponding Eulerian velocity field is $v(x,t)=\dot{\varphi}(\varphi^{-1}(x,t),t)$. Equivalently,
$v=\dot{\varphi}\circ\varphi^{-1}$.

In what follows, we restrict to the case $\Omega\subset\mathbb{R}^3$, with $\varphi_t$ a volume-preserving diffeomorphism of $\Omega$.





Given a time-dependent scalar field \(k:\Omega\times[0,T]\to\mathbb{R}\) such that
\(k(\cdot,t)\in L^{1}(\Omega)\) for each \(t\in[0,T]\), we write $\displaystyle{\langle k\rangle(t):=\int_{\Omega} k(x,t)\,dx}$. Likewise, given two time-dependent vector fields
\(a,b:\Omega\times[0,T]\to\mathbb{R}^{d}\) such that
\(a(\cdot,t),\,b(\cdot,t)\in L^{2}(\Omega;\mathbb{R}^{d})\) for each \(t\in[0,T]\), we define $\displaystyle{\langle a,b\rangle(t):=\int_{\Omega} a(x,t)^{T}b(x,t)\,dx}$.
Since \(\varphi_t\) is volume-preserving, a change of variables yields
\begin{equation}
    \langle a\circ\varphi, b\rangle
    =
    \langle a, b\circ\varphi^{-1}\rangle.
    \label{change:variables}
\end{equation}


\subsection{Problem formulation}\label{Problem:section}


We now introduce the variational optimal control problem studied in this paper. Given fixed endpoints $\varphi_0,\varphi_T \in \hbox{Diff}_{\mathrm{vol}}(\Omega)$, we consider
\begin{equation}
    \min_{v(\cdot)} \int_0^T \left( \frac{1}{2}\langle v,v\rangle + \int_\Omega V(\varphi(X,t))\,dX \right) dt,
    \label{first:functional}
\end{equation}
subject to
\begin{equation}
    \div v = 0,
    \qquad
    \frac{\partial \varphi}{\partial t} = v \circ \varphi,
    \label{dynamical:constraints}
\end{equation}
and the endpoint conditions $\varphi(X,0)=\varphi_0(X)$ and $\varphi(X,T)=\varphi_T(X)$, where $V:\Omega\to\mathbb{R}$ is a $C^1$ function in the Sobolev space $H^{1}(\Omega;\mathbb{R}^{d})$. 


Here $v$ is the divergence-free Eulerian velocity field, while $\varphi$ is the induced Lagrangian configuration. The term involving $V$ acts on the configuration variable $\varphi$ and is interpreted as a barrier-type potential penalizing configurations that approach a prescribed region of $\Omega$. In this sense, the obstacle avoidance requirement is introduced through the variational formulation. The goal is to derive the corresponding extremal equations and their Eulerian reduced form. Accordingly, the obstacle-avoidance requirement is enforced in a variational sense through penalization, rather than as a state constraint.

\section{Extremals for the obstacle avoidance optimal control problem}\label{sec3}


To derive the extremal equations on $\hbox{Diff}_{\mathrm{vol}}(\Omega)$, we augment the functional \eqref{first:functional} with the constraints \eqref{dynamical:constraints} directly through Lagrange multipliers, in a fashion  similar to that in  \cite{bloch2000optimal,holm2009euler}; see also \cite{gay2013geometric}. Consider the infinite-dimensional space of curves
\begin{align*}
\mathcal{C}
:=
\Big\{(\varphi,v,\pi,k)\ &\Big|\ 
\varphi:\Omega\times[0,T]\to\Omega,\,\varphi(\cdot,t)\in \hbox{Diff}_{\mathrm{vol}}(\Omega),\\
&
\varphi(\cdot,0)=\varphi_0,\,
\varphi(\cdot,T)=\varphi_T,\\
&
v:\Omega\times[0,T]\to\mathbb{R}^3,\, v(\cdot,t)\cdot n=0 \text{ on } \partial\Omega,\\
&
\pi:\Omega\times[0,T]\to\mathbb{R}^3,\,
k:\Omega\times[0,T]\to\mathbb{R}
\Big\}.
\end{align*}

Define the action functional \(\mathcal{J}:\mathcal{C}\to\mathbb{R}\) by
\begin{align*}
\mathcal{J}(\varphi,v,\pi,k)
=
\int_0^T \Big(
&\Big\langle \pi,\,
v\circ\varphi-\frac{\partial\varphi}{\partial t}
\Big\rangle
-\frac12\langle v,v\rangle \\
&\qquad
+\langle k\,\div v + V\circ\varphi\rangle
\Big)\,dt.
\end{align*}

Optimal solutions of the problem in Section~\ref{Problem:section} correspond to critical curves of the functional \(\mathcal{J}\).

\begin{theorem}
\label{th1}
If \((\varphi,v,\pi,k)\) is a critical curve of the functional \(\mathcal{J}\), then it satisfies the following necessary conditions for optimality:
\begin{equation}
\label{necessary:conditions}
\begin{split}
\frac{\partial\varphi}{\partial t} &= v\circ\varphi,\quad \hbox{div } v = 0,\quad
v = \pi\circ\varphi^{-1} - \grad k,\\
\frac{\partial\pi}{\partial t} &= -(Tv\circ\varphi)^T\pi + \grad V\circ\varphi.
\end{split}
\end{equation}
\end{theorem}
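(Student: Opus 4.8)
The functional $\mathcal{J}$ is in Hamilton--Pontryagin form, so the plan is to regard $\varphi$, $v$, $\pi$ and $k$ as independent fields and to impose that the first variation of $\mathcal{J}$ vanish with respect to each of them separately. The admissible variations are $\delta\varphi$ vanishing at $t=0$ and $t=T$ (because the endpoint configurations $\varphi_0$ and $\varphi_T$ are prescribed) and $\delta v$ vanishing on $\partial\Omega$, with $\delta\pi$ and $\delta k$ unrestricted. Since $\mathcal{J}$ depends on $\pi$ and on $k$ linearly and without derivatives, those two variations are immediate: $\delta_\pi\mathcal{J}=\int_0^T\big\langle\delta\pi,\,v\circ\varphi-\partial\varphi/\partial t\big\rangle\,dt$ forces the reconstruction equation $\partial\varphi/\partial t=v\circ\varphi$, and $\delta_k\mathcal{J}=\int_0^T\langle\delta k\,\mathrm{div}\,v\rangle\,dt$ forces $\mathrm{div}\,v=0$, which is the first line of \eqref{necessary:conditions}.

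Next I would vary $v$, which appears in $\langle\pi,v\circ\varphi\rangle$, in $-\tfrac12\langle v,v\rangle$, and in $\langle k\,\mathrm{div}\,v\rangle$, rewriting each contribution in the form $\langle(\cdot),\delta v\rangle$. For the first term I would use the change-of-variables identity \eqref{change:variables} to pass to $\langle\pi\circ\varphi^{-1},\delta v\rangle$; for the last term, integration by parts in space together with Stokes' theorem and the boundary condition $\delta v|_{\partial\Omega}=0$ gives $\langle k\,\mathrm{div}\,\delta v\rangle=-\langle\mathrm{grad}\,k,\delta v\rangle$. Assembling the three pieces and invoking the fundamental lemma of the calculus of variations yields $v=\pi\circ\varphi^{-1}-\mathrm{grad}\,k$, the second line of \eqref{necessary:conditions}.

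The main obstacle is the variation with respect to $\varphi$, since $\varphi$ enters $\mathcal{J}$ nonlinearly through the compositions $v\circ\varphi$ and $V\circ\varphi$ and through the time derivative $\partial\varphi/\partial t$. The steps I would take are: linearize the compositions by the chain rule, $\delta(v\circ\varphi)=(Tv\circ\varphi)\,\delta\varphi$ where $Tv$ denotes the spatial Jacobian of $v$, and $\delta(V\circ\varphi)=(\mathrm{grad}\,V\circ\varphi)\cdot\delta\varphi$; move the linear operator $Tv\circ\varphi$ onto $\pi$ via its pointwise transpose, $\langle\pi,(Tv\circ\varphi)\,\delta\varphi\rangle=\langle(Tv\circ\varphi)^{T}\pi,\,\delta\varphi\rangle$; and integrate $-\langle\pi,\partial\delta\varphi/\partial t\rangle$ by parts in $t$, the endpoint terms dropping because $\delta\varphi$ vanishes at $t=0,T$. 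Collecting everything into a single pairing against $\delta\varphi$ and applying the fundamental lemma produces the costate equation $\partial\pi/\partial t=-(Tv\circ\varphi)^{T}\pi+\mathrm{grad}\,V\circ\varphi$, the third line of \eqref{necessary:conditions}, which completes the proof.

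Two points require care. First, the variations $\delta\varphi$ should keep $\varphi(\cdot,t)$ volume-preserving; this is not an extra constraint, because once $\mathrm{div}\,v=0$ the flow of $v$ preserves volume and volume-preservation propagates from the prescribed data. Second, the precise forms of $\delta(v\circ\varphi)$ and $\delta(V\circ\varphi)$, the pointwise transpose identity $\langle\pi,A\,\delta\varphi\rangle=\langle A^{T}\pi,\delta\varphi\rangle$, and the legitimacy of interchanging the variation with $\partial_t$ and with the spatial integral are exactly the routine multivariable-calculus facts deferred to the Appendix; individually elementary, they are where the bookkeeping of the $\varphi$-variation lives.
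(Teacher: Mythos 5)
Your proposal is correct and follows essentially the same route as the paper's proof: independent variations of $\pi$, $k$, $v$, $\varphi$, the change-of-variables identity \eqref{change:variables}, integration by parts in time, the divergence theorem with the vanishing boundary term, and the fundamental lemma of the calculus of variations. The only difference is organizational (you vary the fields one at a time while the paper writes the full first variation at once), which does not change the argument.
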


\begin{proof}
Let $(\varphi_s,v_s,\pi_s,k_s)$ be a smooth admissible variation of $(\varphi,v,\pi,k)$ in $\mathcal{C}$, with
$
(\varphi_0,v_0,\pi_0,k_0)=(\varphi,v,\pi,k),
$
and denote
$\displaystyle{
(\delta\varphi,\delta v,\delta\pi,\delta k)
=
\left.\frac{d}{ds}\right|_{s=0}
(\varphi_s,v_s,\pi_s,k_s)}$. Since the endpoints are fixed, we have
$
\delta\varphi(\cdot,0)=\delta\varphi(\cdot,T)=0,
$
and since the variation is taken within the class of velocity fields tangent to the boundary,
$
\delta v\cdot n=0$ on $\partial\Omega$.

Differentiating $\mathcal{J}$ with respect to $s$ at $s=0$, we obtain
\begin{equation*}
\begin{split}
0
&=
\delta\mathcal{J}(\varphi,v,\pi,k)(\delta\varphi,\delta v,\delta\pi,\delta k)
\\
&=
\int_0^T
\Big\langle \delta\pi,\,
v\circ\varphi-\frac{\partial\varphi}{\partial t}
\Big\rangle
-
\langle v,\delta v\rangle
+
\langle \delta k\,\div v,\rangle
\\
&\quad
+
\Big\langle \pi,\,
\delta v\circ\varphi+(Tv\circ\varphi)(\delta\varphi)-\frac{\partial\delta\varphi}{\partial t}
\Big\rangle
+
\langle k\,\div\delta v\rangle
\\
&\quad
+
\langle \grad V\circ\varphi,\delta\varphi\rangle
\,dt.
\end{split}
\end{equation*}

Since $\delta\pi$ and $\delta k$ are arbitrary, we immediately obtain
$$
\frac{\partial\varphi}{\partial t}=v\circ\varphi,
\qquad
\div v=0.
$$

Next, using identity \eqref{change:variables}, the term involving $\delta v\circ\varphi$ can be rewritten as
$
\langle \pi,\delta v\circ\varphi\rangle
=
\langle \pi\circ\varphi^{-1},\delta v\rangle.
$
Also, integrating by parts in time and using $\delta\varphi(\cdot,0)=\delta\varphi(\cdot,T)=0$ gives
$$
-\int_0^T \Big\langle \pi,\frac{\partial\delta\varphi}{\partial t}\Big\rangle\,dt
=
\int_0^T \Big\langle \frac{\partial\pi}{\partial t},\delta\varphi\Big\rangle\,dt.
$$

Finally, by the divergence theorem \cite{marsden1993basic},
\begin{align*}
\int_\Omega k(x,t)\,\div\delta v(x,t)\,dx
=&
-\int_\Omega \grad k(x,t)\cdot\delta v(x,t)\,dx\\
&+
\int_{\partial\Omega} k\,\delta v\cdot n\,dS.
\end{align*}
The boundary term vanishes because $\delta v\cdot n=0$ on $\partial\Omega$. Therefore,
\begin{equation*}
\begin{split}
0
=
\int_0^T
&
\langle \pi\circ\varphi^{-1} - v - \grad k,\delta v\rangle
\\
&\quad
+
\Big\langle (Tv\circ\varphi)^T\pi + \frac{\partial\pi}{\partial t} + \grad V\circ\varphi,\delta\varphi\Big\rangle
\,dt.
\end{split}
\end{equation*}

Since $\delta v$ and $\delta\varphi$ are arbitrary admissible variations,
$
v=\pi\circ\varphi^{-1}-\grad k
$
and
$
\frac{\partial\pi}{\partial t}
=
-(Tv\circ\varphi)^T\pi+\grad V\circ\varphi$.
\end{proof}

The equations \eqref{necessary:conditions} are precisely the Euler--Lagrange equations associated with the functional $\mathcal{J}$. However, they do not yet give an explicit evolution equation for the Eulerian velocity field $v$, in contrast with the Euler equations \eqref{eulereq}.

\begin{theorem}\label{th2}
If $(\varphi,v,\pi,k)$ is a critical curve of the functional $\mathcal{J}$, then the Eulerian velocity field $v$ satisfies the modified Euler equations
\begin{equation}\label{mod:eulereq}
\begin{split}
\frac{\partial v}{\partial t}+v\cdot\grad v &= -\grad(p-V),\,\,\,
\div v = 0.
\end{split}
\end{equation}
\end{theorem}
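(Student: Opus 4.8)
The plan is to rewrite the optimality system \eqref{necessary:conditions} in terms of the spatial \emph{impulse density} $z := \pi\circ\varphi^{-1}$, to recognize that $z$ obeys the impulse equation \eqref{impulseeq} with an extra forcing term produced by the artificial potential $V$, and then to convert that impulse equation into an evolution equation for $v$ by the usual Helmholtz/Bernoulli manipulations. The first observation is that the second line of \eqref{necessary:conditions} reads $v = z - \text{grad }k$, i.e. $z = v + \text{grad }k$, which is exactly the defining Helmholtz decomposition \eqref{impulsedynamics} of the impulse density; in particular $\text{curl }z = \text{curl }v =: \omega$.

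Next I would transport the costate equation $\partial_t\pi = -(Tv\circ\varphi)^{T}\pi + \text{grad }V\circ\varphi$ from the Lagrangian (material) to the Eulerian (spatial) description. Differentiating $z = \pi\circ\varphi^{-1}$ in $t$ by the chain rule, using $\partial_t(\varphi^{-1}) = -(T\varphi^{-1})\,v$ — which follows from differentiating $\varphi\circ\varphi^{-1}=\mathrm{id}$ together with $\partial_t\varphi = v\circ\varphi$ — and then composing the costate equation with $\varphi^{-1}$ so that every $\circ\,\varphi$ collapses (e.g. $(Tv\circ\varphi)\circ\varphi^{-1}=Tv$ and $(\text{grad }V\circ\varphi)\circ\varphi^{-1}=\text{grad }V$), one arrives at
\begin{equation*}
    \frac{\partial z}{\partial t} + (v\cdot\text{grad})z + (\text{grad }v)^{T}z = \text{grad }V,\qquad \text{div }v = 0 ,
\end{equation*}
which is precisely the impulse equation \eqref{impulseeq} with the additional source term $\text{grad }V$. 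I expect this transport step — the time derivative of the inverse diffeomorphism and the bookkeeping of compositions — to be the only genuinely delicate point; it is exactly the kind of multivariable-calculus computation deferred to the Appendix.

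The remainder is standard vector calculus on $\Omega\subset\R^3$. Using the identity $(v\cdot\text{grad})z + (\text{grad }v)^{T}z = \text{grad}(v\cdot z) - v\times\text{curl }z$ I would rewrite the forced impulse equation as $\partial_t z - v\times\text{curl }z = \text{grad}(V - v\cdot z)$, i.e. \eqref{IDmodified} with gauge $\Lambda = V - v\cdot z$. Substituting $z = v + \text{grad }k$ (so that $\text{curl }z=\text{curl }v$ and $\partial_t z = \partial_t v + \text{grad}\,\partial_t k$) and invoking $v\times\text{curl }v = \tfrac12\text{grad}|v|^{2} - (v\cdot\text{grad})v$ collapses this to
\begin{equation*}
    \frac{\partial v}{\partial t} + v\cdot\text{grad }v = \text{grad}\!\left(V - v\cdot z + \tfrac12|v|^{2} - \frac{\partial k}{\partial t}\right).
\end{equation*}
Finally, using $v\cdot z = |v|^{2} + v\cdot\text{grad }k$ together with the gauge relation $\tfrac{dk}{dt} = p - \tfrac12 v\cdot v$ (equivalently, taking this as the definition of the pressure $p$ associated with the solution, consistently with Section \ref{sec2}), the scalar in parentheses equals $V - p$; since $\text{div }v = 0$ is already part of \eqref{necessary:conditions}, this yields the modified Euler equation \eqref{mod:eulereq}.
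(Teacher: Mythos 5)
Your proposal is correct and follows essentially the same route as the paper's proof: transport the costate equation to the spatial impulse variable $z=\pi\circ\varphi^{-1}=v+\hbox{grad}\,k$ via the time derivative of $\varphi^{-1}$ (the paper's Lemma \ref{aux:lemma:1}), apply the vector identities of Lemma \ref{aux:lemma:2} to reach $\partial_t z - v\times\hbox{curl}\,z=\hbox{grad}(V-v\cdot z)$, and then use the Bernoulli identity together with the gauge/pressure relation $\tfrac{dk}{dt}=p-\tfrac12 v\cdot v$ to collapse to \eqref{mod:eulereq}. Your reformulation of the gauge step through the material derivative of $k$ is equivalent to the paper's choice $\Lambda=-v\cdot z$, so there is no substantive difference.
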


\begin{proof}
The incompressibility condition $\div v=0$ follows from \eqref{necessary:conditions}. It remains to prove the first equation.

Define
$
z=\pi\circ\varphi^{-1}=v+\grad k.
$
Differentiating $z$ with respect to time, we obtain
$$
\frac{\partial z}{\partial t}
=
\frac{\partial \pi}{\partial t}\circ\varphi^{-1}
+
(T\pi\circ\varphi^{-1})\left(\frac{\partial\varphi^{-1}}{\partial t}\right).
$$
Using \eqref{necessary:conditions}, we deduce
\begin{equation*}
\begin{split}
\frac{\partial z}{\partial t}
&=
-(Tv)^T(\pi\circ\varphi^{-1})
+\grad V
+
(T\pi\circ\varphi^{-1})\left(\frac{\partial\varphi^{-1}}{\partial t}\right)\\
&=
-(Tv)^T z+\grad V-Tz(v),
\end{split}
\end{equation*}
where in the last step we used
$
(T\pi\circ\varphi^{-1})\left(\frac{\partial\varphi^{-1}}{\partial t}\right)
=
-Tz(v),
$
see Lemma \ref{aux:lemma:1} in the Appendix. Therefore,
$$
\frac{\partial z}{\partial t}+(Tv)^T z-\grad V+Tz(v)=0.
$$

Using items 1 and 2 of Lemma \ref{aux:lemma:2}, this becomes
$$
\frac{\partial z}{\partial t}
+
(z\cdot\grad)v
+
z\times(\curl v)
-\grad V
+
(v\cdot\grad)z
=0.
$$
Then, by item 3 of Lemma \ref{aux:lemma:2}, we obtain
$$
\frac{\partial z}{\partial t}
+
\grad(v\cdot z)
-
v\times(\curl z)
-
\grad V
=0.
$$

Now let
$
\Lambda=\frac{\partial k}{\partial t}-p-\frac12\,v\cdot v
$
be the gauge associated with $z$. If we choose the geometric gauge
$
\Lambda=-v\cdot z,
$
then the previous equation can be rewritten as
$$
\frac{\partial z}{\partial t}
-
v\times(\curl z)
-
\grad V
=
\grad\left(\frac{\partial k}{\partial t}-p-\frac12\,v\cdot v\right).
$$

On the other hand, as a particular case of item 3 in Lemma \ref{aux:lemma:2}, we have
$
(v\cdot\grad)v
=
\frac12\,\grad(v\cdot v)-v\times(\curl v).
$
Since $v=z-\grad k$ and $\curl(\grad k)=0$, it follows that
$
v\times(\curl z)
=
\frac12\,\grad(v\cdot v)-(v\cdot\grad)v.
$
Hence,
$$
\frac{\partial z}{\partial t}
+
(v\cdot\grad)v
-
\grad V
=
\grad\left(\frac{\partial k}{\partial t}-p\right).
$$

Finally, using
$
\frac{\partial z}{\partial t}
=
\frac{\partial v}{\partial t}
+
\grad\frac{\partial k}{\partial t},
$
we conclude that
$
\frac{\partial v}{\partial t}
+
(v\cdot\grad)v
=
-\grad(p-V)$.
\end{proof}

The previous theorem shows that, after elimination of the auxiliary variables, the extremals satisfy the modified Euler equations \eqref{mod:eulereq}. In particular, the result shows that a configuration-level barrier introduced in the Lagrangian description survives reduction as a pressure-like contribution in the Eulerian equations.


\begin{corollary}
Let $(\varphi,v,\pi,k)$ be a critical curve of $\mathcal{J}$, and define the effective pressure
$\tilde p:=p-V$. Then the modified Euler equations in Theorem~\ref{th2} may be written as
$$
\frac{\partial v}{\partial t}+(v\cdot\grad)v=-\grad\tilde p,
\qquad
\div v=0.
$$
In particular, taking the normal component on $\partial\Omega$ yields
$$\displaystyle{
\partial_n\tilde p
=
-\,n\cdot\left(\frac{\partial v}{\partial t}+(v\cdot\grad)v\right)},
$$
or equivalently,
$$\displaystyle{
\partial_n p
=
\partial_n V
-
n\cdot\left(\frac{\partial v}{\partial t}+(v\cdot\grad)v\right)}.$$
\end{corollary}

\begin{proof}
The first statement follows immediately from Theorem~\ref{th2} by introducing the effective pressure $\tilde p=p-V$. Taking the normal component of
$
\frac{\partial v}{\partial t}+(v\cdot\grad)v=-\grad\tilde p
$ on $\partial\Omega$, we obtain
$\displaystyle{
n\cdot\left(\frac{\partial v}{\partial t}+(v\cdot\grad)v\right)
=
-\,n\cdot\grad\tilde p
=
-\partial_n\tilde p}$. 
Since $\tilde p=p-V$, the final identity follows.
\end{proof}

\begin{remark}
The previous corollary shows that the barrier potential enters the reduced dynamics through an effective pressure shift. Consequently, although the present formulation is open-loop and variational in nature, Theorem~\ref{th2} suggests a possible interpretation in terms of boundary pressure actuation. A full boundary-control formulation, however, lies beyond the scope of the present paper.
\end{remark}

\section{Numerical Illustration}\label{sec:exp}

In this section, we present a numerical illustration of the $2$-dimensional version of equation \eqref{mod:eulereq}. The purpose of this section is not to solve the full boundary value problem associated with the variational formulation, but rather to illustrate the reduced Eulerian dynamics induced by the barrier-type potential. In particular, we show how the modified Euler equations produce a local deformation of the flow in the neighborhood of a circular obstacle region.



To recover a discrete divergence-free vector field after the time discretization, we use a discretized version of the Helmholtz decomposition \cite{marsden_fluids,HDD,PAVLOV}.

Given a vector field on a two-dimensional \(\Delta x\)-uniform grid, denoted by \(X(i,j)=(u(i,j),v(i,j))\), we define the discrete divergence operator by
\begin{equation*}
\begin{split}
\nabla^{d}\cdot X(i,j)
&=
\frac{u(i+1,j)-u(i-1,j)}{2\Delta x}\\
&\quad+
\frac{v(i,j+1)-v(i,j-1)}{2\Delta x}.
\end{split}
\end{equation*}

Given a scalar field on the same grid, denoted by \(\phi(i,j)\), its discrete gradient is defined by
\begin{equation*}
\nabla^{d}\phi(i,j)
=
\left(
\begin{array}{c}
\dfrac{\phi(i+1,j)-\phi(i-1,j)}{2\Delta x}\\[0.8em]
\dfrac{\phi(i,j+1)-\phi(i,j-1)}{2\Delta x}
\end{array}
\right)^{T}.
\end{equation*}

Finally, the discrete Laplacian of a scalar field is given by
\begin{equation*}
\begin{split}
\Delta^{d}\phi(i,j)
&=
\nabla^{d}\cdot \nabla^{d}\phi\\
&=
\frac{\phi(i+2,j)+\phi(i-2,j)}{4\Delta x^{2}}\\
&\quad+
\frac{\phi(i,j+2)+\phi(i,j-2)-4\phi(i,j)}{4\Delta x^{2}}.
\end{split}
\end{equation*}

A discrete divergence-free vector field \(X_{0}\) is a vector field on the grid satisfying $\nabla^{d}\cdot X_{0}=0$.

Given a vector field \(X\) on the grid, we define its discrete Helmholtz projection by $X_{0}=X-\nabla^{d}\phi$, where \(\phi\) is a scalar field satisfying the discrete Poisson equation $\Delta^{d}\phi=\nabla^{d}\cdot X$. With suitable boundary conditions, this scalar field is unique.

We consider the artificial potential
\[
V(x,y)=-A\exp\!\left(-\frac{(x-7)^2+(y-7)^2}{\sigma^2}\right),
\quad A,\sigma>0,
\]
centered at the obstacle location \((7,7)\). Since the reduced modified Euler equation \eqref{mod:eulereq} contains the term \(+\nabla V\), this choice yields an outward shaping term in the neighborhood of the obstacle. In other words, the repulsive character is encoded through the spatial profile of \(V\): the potential is most negative near the obstacle and increases away from it, so that its gradient points radially outward.

The simulations were implemented in Python on a two-dimensional periodic uniform grid. At each time step, we first update the velocity field by a finite-difference discretization of the modified Euler equation, including the barrier contribution through the term $\nabla V$. Since this explicit update does not preserve the divergence-free constraint exactly at the discrete level, we then apply a discrete Helmholtz projection: we solve the discrete Poisson equation associated with the discrete divergence of the updated field and subtract the corresponding discrete gradient. This yields an approximately divergence-free velocity field at each step, which is then used in the next iteration. 
We use the initial velocity field $u(x,y,0)=-\sin y \cos x$, $v(x,y,0)=\sin y \cos x$, and integrate the dynamics on the periodic square \([0,4\pi]\times[0,4\pi]/\{0,4\pi\}\) using a uniform \(80\times 80\) grid. The time step is chosen as \(\Delta t=1.5\times 10^{-3}\), and the evolution is computed over \(140\) steps. The  procedure used to generate the simulations is summarized in Algorithm~\ref{alg:reduced_modified_euler}.

\begin{algorithm}[h!]
\caption{Modified Euler dynamics}
\label{alg:reduced_modified_euler}
\begin{algorithmic}[1]
\State Choose the barrier potential $V(x,y)$ and the initial discrete divergence-free field $X^0$
\For{$n=0,\dots,N-1$}
    \State Update the velocity field by a finite-difference discretization of the modified Euler equation \eqref{mod:eulereq}
    ignoring the divergence-free constraint at the intermediate stage
    \State Denote the resulting intermediate field by $X^\ast$
    \State Compute $\nabla_d\cdot X^\ast$
    \State Solve $\Delta_d \phi = \nabla_d\cdot X^\ast$
    \State Set $X^{n+1} = X^\ast - \nabla_d \phi$
     so that $X^{n+1}$ is approximately discrete divergence-free
\EndFor
\end{algorithmic}
\end{algorithm}

\begin{figure}[h!]
    \centering
    \includegraphics[scale=0.58]{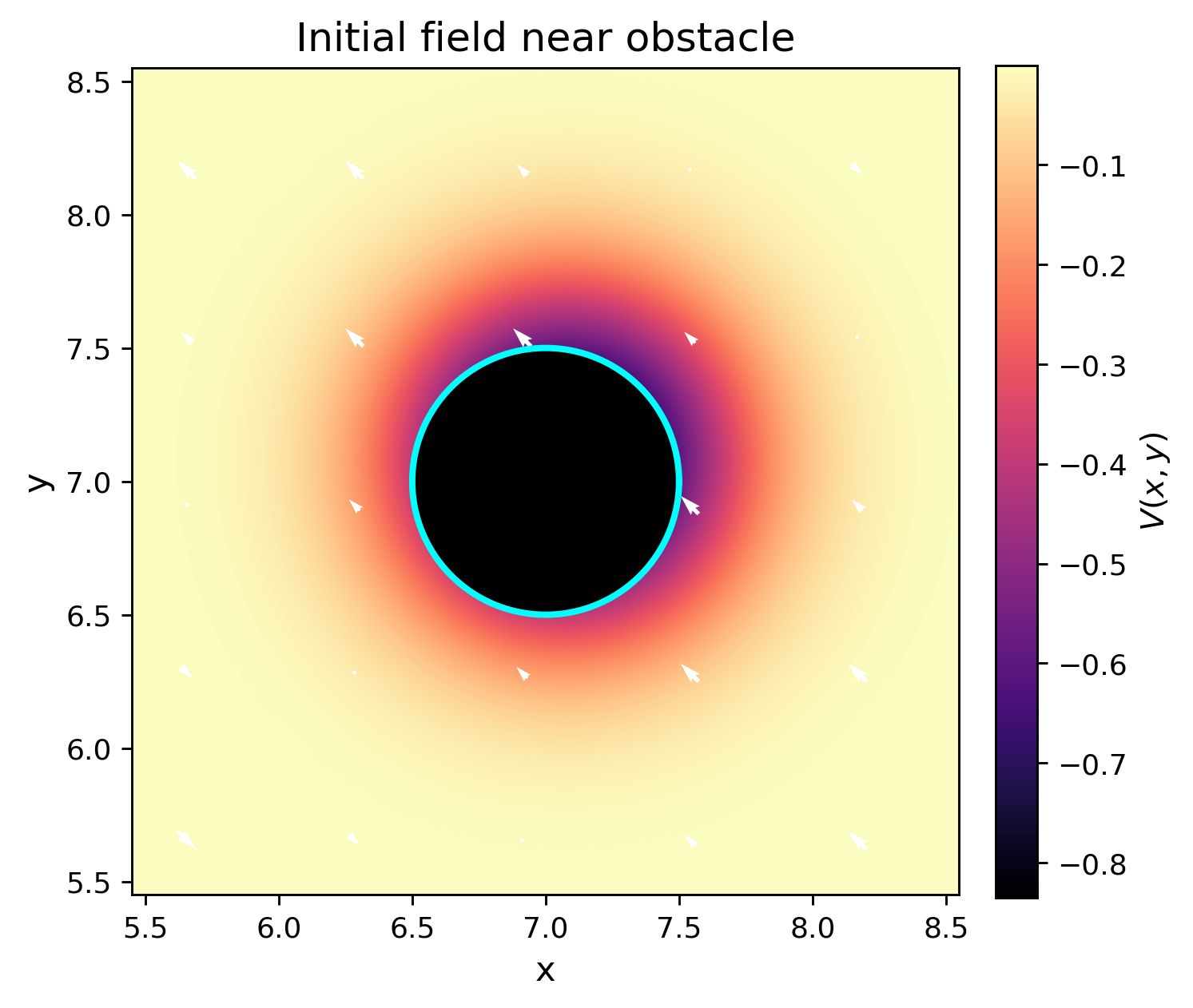}
    \caption{Initial velocity field in the neighborhood of the obstacle. The arrows represent the direction and magnitude of the initial velocity, while the color map shows the potential \(V(x,y)\).}
    \label{fig:initial_flow}
\end{figure}

Figure~\ref{fig:initial_flow} shows the initial velocity field in the neighborhood of the obstacle together with the artificial potential. The arrows represent the local direction and magnitude of the initial velocity, while the color map represents the scalar field \(V(x,y)\). The dark region corresponds to the most negative values of the potential, concentrated near the obstacle, whereas lighter tones correspond to values closer to zero away from it. Since the dynamics depends on \(\nabla V\), this figure should be interpreted as a visualization of the local shaping landscape rather than of the force itself.

\begin{figure}[h!]
    \centering
    \includegraphics[scale=0.58]{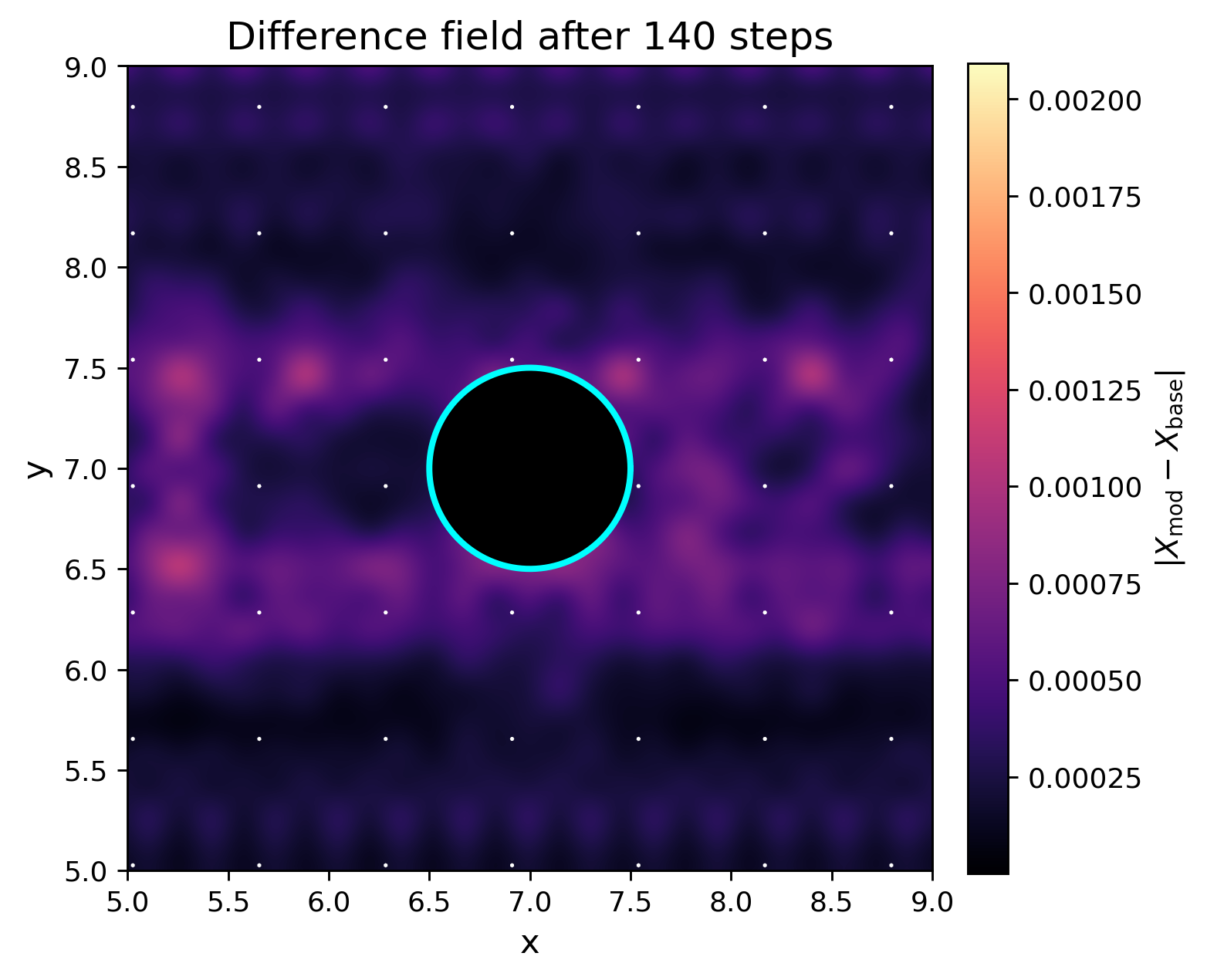}
    \caption{Local difference field between the projected modified and baseline velocity fields after \(140\) time steps. The color map represents the magnitude \(\|X_{\mathrm{mod}}-X_{\mathrm{base}}\|\), showing that the barrier term induces a localized deformation of the flow near the obstacle region.}
    \label{fig:final_flow}
\end{figure}


Figure~\ref{fig:final_flow} displays the local difference field between the projected modified and baseline velocity fields after \(140\) time steps. 
Here, \(X_{\mathrm{base}}\) denotes the projected discrete velocity field obtained from the baseline incompressible Euler evolution, that is, from the numerical scheme without the barrier potential, while \(X_{\mathrm{mod}}\) denotes the projected discrete velocity field obtained from the modified evolution \eqref{mod:eulereq}, including the barrier term. Thus, the figure represents the magnitude
$\|X_{\mathrm{mod}}-X_{\mathrm{base}}\|$ together with the corresponding difference vectors. The color map therefore measures the local sensitivity of the flow to the barrier term: darker regions indicate a smaller deviation from the baseline evolution, whereas lighter regions indicate a stronger local effect of the potential.

The numerical values are consistent with this interpretation. In our computations, the maximum pointwise difference between the two projected fields is of order $\max \|X_{\mathrm{mod}}-X_{\mathrm{base}}\|
\approx
2.09\times 10^{-3}$, with componentwise deviations $\max |\delta u| \approx 1.54\times 10^{-3}$ and $\max |\delta v| \approx 2.09\times 10^{-3}$. Thus, the barrier term does not drastically reorganize the global flow, but it does induce a visible and localized deformation in the neighborhood of the obstacle. This is precisely the effect illustrated in Figure~\ref{fig:final_flow}.

Regarding incompressibility, the discrete Helmholtz projection substantially reduces the divergence error in the baseline evolution, from an initial value of approximately $2.33\times 10^{-2}$ to
$2.74\times 10^{-3}$ at the final time. For the modified evolution, the final discrete divergence remains of size $1.83\times 10^{-2}$, which indicates that the present discretization should be understood as an illustrative numerical approximation rather than as a structure-preserving scheme. Nevertheless, the simulation is sufficient to show that the barrier-type term produces a coherent local deformation of the reduced Eulerian dynamics near the obstacle region.

\section{Conclusions and Future Work}

We have shown that a variational optimal control formulation for incompressible ideal flow with a barrier-type potential leads to modified Euler equations. The potential acts on the Lagrangian configuration as an obstacle-avoidance constraint and appears in the reduced Eulerian description as a shift in the effective pressure.

Our results also suggest that pressure may serve as a natural control mechanism in this setting. In particular, since the barrier term appears in the reduced equations as a pressure shift, one may expect that boundary control of the pressure could be used to induce obstacle-avoidance fluid configurations. A related direction is a more detailed study of the potential $V$ itself, including how its shape, localization, regularity, and strength influence the resulting flow deformation and the structural properties of the reduced dynamics.

Since the present formulation is open-loop, a natural direction for future work is to explore alternatives to barrier-type obstacle avoidance in incompressible flow. Rather than using a configuration-dependent potential, one may study localized steering laws acting directly on the reduced Eulerian dynamics, for example through gyroscopic and dissipative terms supported near the obstacle region. This could lead to a different variational framework for obstacle-aware flow, with a clearer separation between steering and dissipation

\bibliographystyle{IEEEtran}
\bibliography{cluster}

@book{marsden_fluids,
  author = {A. J. Chorin and J. E. Marsden},
  title = {A Mathematical Introduction to Fluid Mechanics},
  year = {1990},
  publisher = {Springer New York, NY},
  series = {Texts in Applied Mathematics},
  edition = {2},
  doi = {10.1007/978-1-4684-0364-0},
  isbn = {978-1-4684-0364-0},
  note = {Originally published in the series: Universitext},
  pages = {IX, 168},
  keywords = {Mathematics, general}
}

@article{PAVLOV,
title = {Structure-preserving discretization of incompressible fluids},
journal = {Physica D: Nonlinear Phenomena},
volume = {240},
number = {6},
pages = {443-458},
year = {2011},
issn = {0167-2789},
author = {D. Pavlov and P. Mullen and Y. Tong and E. Kanso and J.E. Marsden and M. Desbrun},
keywords = {Euler equations, Lie group variational integrators}
}

@ARTICLE{HDD,
  author={Bhatia, Harsh and Norgard, Gregory and Pascucci, Valerio and Bremer, Peer-Timo},
  journal={IEEE Transactions on Visualization and Computer Graphics}, 
  title={The Helmholtz-Hodge Decomposition—A Survey}, 
  year={2013},
  volume={19},
  number={8},
  pages={1386-1404},
  doi={10.1109/TVCG.2012.316}}

@ARTICLE{BlCaCoIJC,
  title={Dynamic interpolation for obstacle avoidance on Riemannian manifolds},
  author={A. Bloch and M. Camarinha and L. J. Colombo},
  journal={International Journal of Control},
  volume={94},
  number={3},
  pages={588--600},
  year={2021},
  publisher={Taylor and Francis Ltd}
}

@ARTICLE{sh,
  title={Variational collision and obstacle avoidance of multi-agent systems on Riemannian manifolds},
  author={R.S. Chandrasekaran and L. J. Colombo and M. Camarinha and R. Banavar and A. Bloch},
  journal={Proceedings of the 2020 European Control Conference},
  volume={},
  number={},
  pages={},
  year={2020},
  publisher={}
}

@ARTICLE{mishal,
  title={Variational collision avoidance problems on Riemannian manifolds},
  author={M. Assif and R. Banavar and A. Bloch and M. Camarinha and L. J. Colombo},
  journal={Proceedings of the 2018 IEEE International Conference on Decision and Control},
  volume={},
  number={},
  pages={2791--2796},
  year={2018},
  publisher={IEEE}
}

@ARTICLE{point,
  title={Variational point-obstacle avoidance on Riemannian manifolds},
  author={A. Bloch and M. Camarinha and L. J. Colombo},
  journal={Mathematics of Control, Signals, and Systems},
  volume={33},
  pages={109--121},
  year={2021},
  publisher={Springer London}
}

@ARTICLE{BlCaCoCDC,
  title={Variational obstacle avoidance on Riemannian manifolds},
  author={A. Bloch and M. Camarinha and L. J. Colombo},
  journal={Proceedings of the 2017 IEEE International Conference on Decision and Control},
  volume={},
  number={},
  pages={146--150},
  year={2017},
  publisher={IEEE}
}

@inproceedings{bloch1998discrete,
  title={Discrete rigid body dynamics and optimal control},
  author={Bloch, Anthony M and Crouch, Peter E and Marsden, Jerrold E and Ratiu, Tudor S},
  booktitle={Proceedings of the 37th IEEE Conference on Decision and Control (Cat. No. 98CH36171)},
  volume={2},
  pages={2249--2254},
  year={1998},
  organization={IEEE}
}

@article{bloch1997double,
  title={Double bracket equations and geodesic flows on symmetric spaces},
  author={Bloch, Anthony M and Brockett, Roger W and Crouch, Peter E},
  journal={Communications in mathematical physics},
  volume={187},
  pages={357--373},
  year={1997},
  publisher={Springer}
}

@article{bloch1996optimal,
  title={Optimal control and geodesic flows},
  author={Bloch, Anthony M and Crouch, Peter E},
  journal={Systems \& control letters},
  volume={28},
  number={2},
  pages={65--72},
  year={1996},
  publisher={Elsevier}
}

@article{ebin1969groups,
  title={Groups of diffeomorphisms and the solution of the classical Euler equations for a perfect fluid},
  author={Ebin, David G and Marsden, Jerrold E},
  year={1969}
}

@article{marsden1984semidirect,
  title={Semidirect products and reduction in mechanics},
  author={Marsden, Jerrold E and Ra{\c{t}}iu, Tudor and Weinstein, Alan},
  journal={Transactions of the american mathematical society},
  volume={281},
  number={1},
  pages={147--177},
  year={1984}
}

@article{marsden1983coadjoint,
  title={Coadjoint orbits, vortices, and Clebsch variables for incompressible fluids},
  author={Marsden, Jerrold and Weinstein, Alan},
  journal={Physica D: Nonlinear Phenomena},
  volume={7},
  number={1-3},
  pages={305--323},
  year={1983},
  publisher={Elsevier}
}

@article{gay2013geometric,
  title={Geometric dynamics of optimization},
  author={Gay-Balmaz, Fran{\c{c}}ois and Holm, DD and Ratiu, TS},
  journal={Communications in Mathematical Sciences},
  volume={11},
  number={1},
  pages={163--231},
  year={2013}
}

@book{marsden1993basic,
  title={Basic multivariable calculus},
  author={Marsden, Jerrold E and Tromba, Anthony J and Weinstein, Alan and others},
  year={1993},
  publisher={Springer}
}

@article{holm2009euler,
  title={Euler's fluid equations: Optimal control vs optimization},
  author={Holm, Darryl D},
  journal={Physics Letters A},
  volume={373},
  number={47},
  pages={4354--4359},
  year={2009},
  publisher={Elsevier}
}

@inproceedings{bloch2000optimal,
  title={An optimal control formulation for inviscid incompressible ideal fluid flow},
  author={Bloch, Anthony M and Holm, Darryl D and Crouch, Peter E and Marsden, Jerrold E},
  booktitle={Proceedings of the 39th IEEE Conference on Decision and Control (Cat. No. 00CH37187)},
  volume={2},
  pages={1273--1278},
  year={2000},
  organization={IEEE}
}

@inproceedings{arnold1966geometrie,
  title={Sur la g{\'e}om{\'e}trie diff{\'e}rentielle des groupes de Lie de dimension infinie et ses applications {\`a} l'hydrodynamique des fluides parfaits},
  author={Arnold, Vladimir},
  booktitle={Ann. de l'institut Fourier},
  volume={16},
  pages={319--361},
  year={1966}
}

@article{holm1998euler,
  title={The Euler--Poincar{\'e} equations and semidirect products with applications to continuum theories},
  author={Holm, Darryl D and Marsden, Jerrold E and Ratiu, Tudor S},
  journal={Advances in Mathematics},
  volume={137},
  number={1},
  pages={1--81},
  year={1998},
  publisher={Elsevier}
}

@article{gaybalmaz2011clebsch,
  title={Clebsch optimal control formulation in mechanics},
  author={Gay-Balmaz, Fran{\c{c}}ois and Ratiu, Tudor S},
  journal={J. Geom. Mech},
  volume={3},
  number={1},
  pages={41--79},
  year={2011}
}

@ARTICLE{cluster,
title={Dynamic control of mobile multirobot systems: the cluster space formulation},
author={I.\ \!Mas and C.\ \!Kitts},
journal={IEEE Access},
year={2014},
volume={\!\!2},
number={\!\!2},
pages={\!\!558-570}
}

\section*{Appendix: Auxiliary Lemmas}

\begin{lemma}\label{aux:lemma:1}
    Let $\varphi:\Omega \times \R \to \Omega$ be a diffeomorphism for each fixed $t$, denote
    $v=\frac{\partial \varphi}{\partial t}\circ\varphi^{-1},$
    and let $z=\pi\circ\varphi^{-1}$, where $\pi:\Omega\times\R\to\R^3$. Then $$(T\pi\circ\varphi^{-1})\left(\frac{\partial\varphi^{-1}}{\partial t}\right)=-Tz(v).
    $$
\end{lemma}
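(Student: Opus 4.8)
The plan is to unwind the definition $z=\pi\circ\varphi^{-1}$ with the chain rule, express the spatial Jacobian $Tz$ in terms of $(T\pi\circ\varphi^{-1})$ and the spatial Jacobian $T\varphi^{-1}$, eliminate $T\varphi^{-1}$ via the inverse-function identity, and finally identify the remaining factor by differentiating $\varphi\circ\varphi^{-1}=\mathrm{id}$ in time. Throughout, $T$ denotes differentiation with respect to the spatial variable while $\partial/\partial t$ denotes differentiation in time at a frozen spatial point.

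First I would fix $t$, suppress it from the notation, and write $z(x)=\pi(\varphi^{-1}(x))$. The chain rule then gives
\[
Tz(x)=(T\pi)\big(\varphi^{-1}(x)\big)\,T\varphi^{-1}(x),
\qquad\text{i.e.}\qquad
Tz=(T\pi\circ\varphi^{-1})\,T\varphi^{-1}.
\]
Since $\varphi(\cdot,t)$ is a diffeomorphism, $T\varphi^{-1}$ is pointwise invertible and the inverse-function theorem yields $\big(T\varphi^{-1}\big)^{-1}=T\varphi\circ\varphi^{-1}$; hence $T\pi\circ\varphi^{-1}=Tz\,(T\varphi\circ\varphi^{-1})$. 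Substituting this into the left-hand side of the claimed identity gives
\[
(T\pi\circ\varphi^{-1})\!\left(\frac{\partial\varphi^{-1}}{\partial t}\right)
=Tz\left((T\varphi\circ\varphi^{-1})\,\frac{\partial\varphi^{-1}}{\partial t}\right).
\]

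Next I would evaluate the bracket on the right. Differentiating the identity $\varphi\big(\varphi^{-1}(x,t),t\big)=x$ with respect to $t$ and applying the chain rule produces
\[
(T\varphi\circ\varphi^{-1})\,\frac{\partial\varphi^{-1}}{\partial t}+\left(\frac{\partial\varphi}{\partial t}\right)\circ\varphi^{-1}=0,
\]
so that $(T\varphi\circ\varphi^{-1})\,\partial\varphi^{-1}/\partial t=-v$ by the definition of $v$. Plugging this back in gives exactly $(T\pi\circ\varphi^{-1})(\partial\varphi^{-1}/\partial t)=-Tz(v)$, as desired.

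The statement carries no genuine difficulty: it is pure multivariable calculus, and no analytic subtleties arise because $\varphi$ is smooth and a diffeomorphism for each fixed $t$. The only point demanding care — the ``main obstacle'', such as it is — is the bookkeeping of the two kinds of derivative and of the points at which each Jacobian is evaluated; once the composition $z=\pi\circ\varphi^{-1}$ is differentiated correctly in both variables, the result is immediate from the inverse-function identity. This is the same elementary mechanism that underlies the transport (Lie-derivative) formulas recalled in Section~\ref{sec2}.
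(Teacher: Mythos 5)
Your proof is correct and rests on exactly the same two ingredients as the paper's: differentiating $\varphi\circ\varphi^{-1}=\mathrm{id}$ in time to obtain $(T\varphi\circ\varphi^{-1})\,\partial\varphi^{-1}/\partial t=-v$, and the chain rule identity $Tz=(T\pi\circ\varphi^{-1})\,T\varphi^{-1}$. The only difference is the order of the algebra (the paper first solves for $\partial\varphi^{-1}/\partial t=-T\varphi^{-1}(v)$ and then applies $T\pi\circ\varphi^{-1}$, whereas you factor the left-hand side first), which is immaterial.
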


\begin{proof}
    First, let us write an alternative expression for $\frac{\partial\varphi^{-1}}{\partial t}$. Indeed, noting that $\varphi\circ\varphi^{-1}=\mathrm{id}_{\Omega}$,
    we deduce that
    $$
    \frac{\partial\varphi}{\partial t}\circ\varphi^{-1}
    +
    (T\varphi\circ\varphi^{-1})\left(\frac{\partial\varphi^{-1}}{\partial t}\right)
    =0.
    $$
    Therefore,
    $(T\varphi\circ\varphi^{-1})\left(\frac{\partial\varphi^{-1}}{\partial t}\right)=-v
    $
    and, hence,
    $$
    \frac{\partial\varphi^{-1}}{\partial t}
    =
    -(T\varphi\circ\varphi^{-1})^{-1}(v)
    =
    -T\varphi^{-1}(v).
    $$
    Finally, applying the map $(T\pi\circ\varphi^{-1})$, we deduce
    \begin{equation*}
        \begin{split}
            (T\pi\circ\varphi^{-1})\left(\frac{\partial\varphi^{-1}}{\partial t}\right)
            &=
            -(T\pi\circ\varphi^{-1})(T\varphi^{-1}(v)) \\
            &=
            -T(\pi\circ\varphi^{-1})(v)
            =
            -Tz(v).
        \end{split}
    \end{equation*}
\end{proof}

\begin{lemma}\label{aux:lemma:2}
    Let $z:\Omega\times\R\to\R^3$ and $v:\Omega\times\R\to\R^3$ be two time-dependent vector fields on $\Omega\subseteq\R^3$. Then
    \begin{enumerate}
        \item $(Tv)^Tz=(z\cdot\grad)v+z\times(\curl v)$.
        \item $Tz(v)=(v\cdot\grad)z$.
        \item $(v\cdot\grad)z+(z\cdot\grad)v+z\times(\curl v)=\grad(v\cdot z)-v\times(\curl z)$.
    \end{enumerate}
\end{lemma}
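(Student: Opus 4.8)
The plan is to verify all three identities by a direct computation in Cartesian components on $\Omega \subseteq \R^{3}$, using the convention that $Tv$ denotes the Jacobian matrix with entries $(Tv)_{ij} = \partial v_i/\partial x_j$, so that $(Tv)^{T}$ has entries $\partial v_j/\partial x_i$, and writing $\text{grad}$ for $\partial/\partial x_i$ componentwise. Item 2 is then essentially a restatement of a definition: the $i$-th component of $Tz(v)$ is $\sum_j (\partial z_i/\partial x_j)\,v_j$, which is exactly $\big((v\cdot\text{grad})z\big)_i$, so there is nothing to prove beyond unpacking notation.

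For item 1 I would compare the $i$-th components of the two sides. The left side is $\big((Tv)^{T}z\big)_i = \sum_j (\partial v_j/\partial x_i)\,z_j$. On the right, $\big((z\cdot\text{grad})v\big)_i = \sum_j z_j\,\partial v_i/\partial x_j$, and for the cross-product term I would expand $\big(z\times\text{curl }v\big)_i$ with the Levi-Civita symbol and apply the contraction identity $\varepsilon_{ijk}\varepsilon_{klm} = \delta_{il}\delta_{jm} - \delta_{im}\delta_{jl}$, which produces $\sum_j z_j\,\partial v_j/\partial x_i - \sum_j z_j\,\partial v_i/\partial x_j$. Adding the two right-hand terms, the $\sum_j z_j\,\partial v_i/\partial x_j$ pieces cancel and what survives is $\sum_j z_j\,\partial v_j/\partial x_i$, which matches the left side.

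For item 3 there are two equivalent routes, and I would take the short one that reuses item 1. The product rule gives, componentwise, $\big(\text{grad}(v\cdot z)\big)_i = \partial_i\!\sum_j v_j z_j = \sum_j (\partial_i v_j)z_j + \sum_j v_j(\partial_i z_j) = \big((Tv)^{T}z\big)_i + \big((Tz)^{T}v\big)_i$, i.e. $\text{grad}(v\cdot z) = (Tv)^{T}z + (Tz)^{T}v$. Now apply item 1 once as stated, $(Tv)^{T}z = (z\cdot\text{grad})v + z\times\text{curl }v$, and once with the roles of $v$ and $z$ interchanged, $(Tz)^{T}v = (v\cdot\text{grad})z + v\times\text{curl }z$. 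Substituting both into the previous identity and moving the term $v\times\text{curl }z$ to the left gives precisely item 3; alternatively one can simply redo the same $\varepsilon$-$\delta$ bookkeeping directly on item 3.

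The computations are entirely routine, so I do not expect a genuine obstacle; the only point requiring care is fixing the Jacobian/transpose convention once and for all and applying the $\varepsilon$-$\delta$ contraction with the index order chosen consistently, so that the cancellations in items 1 and 3 come out with the correct signs. (Since the paper is phrased for $\Omega \subseteq \R^{3}$, I would not worry about coordinate-free or curved-domain versions of curl here.)
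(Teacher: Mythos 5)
Your proposal is correct, and for items 1 and 2 it is essentially the paper's own argument: a direct component computation (the paper writes out all three components of $(z\cdot\text{grad})v$ and $z\times\text{curl }v$ explicitly and cancels terms, whereas you compress the same bookkeeping into the $\varepsilon$-$\delta$ contraction identity, which is merely a notational difference). The genuine divergence is item 3: the paper does not prove it at all, deferring to the literature, while you derive it from the product rule $\text{grad}(v\cdot z)=(Tv)^{T}z+(Tz)^{T}v$ together with two applications of item 1 (once with $v$ and $z$ interchanged). That two-line argument is sound --- I checked the index conventions and the rearrangement, and the signs come out right --- and it buys a self-contained proof that also makes transparent why item 3 is just the symmetrization of item 1; this is arguably preferable to the paper's citation, at no extra cost.
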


\begin{proof}
    \begin{enumerate}
        \item Let us compute in coordinates $(z\cdot\grad)v+z\times(\curl v)$ by denoting
        $z=(z_1,z_2,z_3)$, $v=(v_1,v_2,v_3)$,
        and the coordinates in $\R^3$ by $(x_1,x_2,x_3)$. Then
        $$
        (z\cdot\grad)v
        =
        \left(
        z_i\frac{\partial v_1}{\partial x_i},
        z_i\frac{\partial v_2}{\partial x_i},
        z_i\frac{\partial v_3}{\partial x_i}
        \right)
        $$
        and
        \begin{equation*}
            \begin{split}
                z\times(\curl v)
                =
                &\left(
                z_2\left(\frac{\partial v_2}{\partial x_1}-\frac{\partial v_1}{\partial x_2}\right)
                -
                z_3\left(\frac{\partial v_1}{\partial x_3}-\frac{\partial v_3}{\partial x_1}\right),
                \right.\\
                &
                z_3\left(\frac{\partial v_3}{\partial x_2}-\frac{\partial v_2}{\partial x_3}\right)
                -
                z_1\left(\frac{\partial v_2}{\partial x_1}-\frac{\partial v_1}{\partial x_2}\right),\\
                &
                \left.
                z_1\left(\frac{\partial v_1}{\partial x_3}-\frac{\partial v_3}{\partial x_1}\right)
                -
                z_2\left(\frac{\partial v_3}{\partial x_2}-\frac{\partial v_2}{\partial x_3}\right)
                \right).
            \end{split}
        \end{equation*}
        Once we add both vectors, most of the terms cancel out and we are left with
        \begin{equation*}
            \begin{split}
                (z\cdot\grad)v+z\times(\curl v)
                &=
                \left(
                z_i\frac{\partial v_i}{\partial x_1},
                z_i\frac{\partial v_i}{\partial x_2},
                z_i\frac{\partial v_i}{\partial x_3}
                \right) \\
                &=
                (Tv)^Tz.
            \end{split}
        \end{equation*}

        \item It is straightforward, just notice that
        \begin{align*}
            Tz(v)
            =
            &
            \left(
            v_i\frac{\partial z_1}{\partial x_i},
            v_i\frac{\partial z_2}{\partial x_i},
            v_i\frac{\partial z_3}{\partial x_i}
            \right)=
            (v\cdot\grad)z.
        \end{align*}

        \item This item can be found in the literature (see \cite{marsden1993basic}, for instance).
    \end{enumerate}
\end{proof}

\end{document}